\keywords{Gödelian recursion, paradox resolution, term rewriting, paraconsistent logic, formal methods, type theory}
\subjclass{F.4.1 [Mathematical Logic and Formal Languages]: Mathematical Logic---Lambda calculus and related systems; F.1.1 [Computation by Abstract Devices]: Models of Computation}
\newcolumntype{P}[1]{>{\raggedright\arraybackslash}p{#1}}
\begin{document}

\title[Gödel Mirror]{Gödel Mirror: A Formal System for Contradiction-Driven Recursion}

\author[J.~Chan]{Jhet Chan\lmcsorcid{0009-0008-4363-2979}}[a]

\address{Independent Researcher, Kuala Lumpur, Malaysia}
\email{jhetchan@gmail.com}


\begin{abstract}
We introduce the \textbf{Gödel Mirror}, a formal system defined in Lean 4 that treats \textbf{contradiction as a control signal} for recursive structural evolution. Inspired by Gödelian self-reference, our system's operational semantics encode symbolic paradoxes as deterministic transitions. Unlike systems designed to guarantee normalization, the Gödel Mirror is a minimal and verifiable architecture that leverages a controlled, non-terminating loop as a productive feature. Our Lean 4 mechanization proves that self-referential paradoxes are deterministically encapsulated and resolved into new structures without leading to logical explosion, yielding a paraconsistent inference loop:

\texttt{Paradox} $\rightarrow$ \texttt{Encapsulate} $\rightarrow$ \texttt{Reenter} $\rightarrow$ \texttt{Node}

We argue that this calculus opens a new class of symbolic systems in which contradiction is metabolized into structure, providing a formal basis for agents capable of resolving internal inconsistencies.
\end{abstract}

\maketitle

\section{Introduction}\label{S:introduction}
This work introduces the \textbf{Gödel Mirror},\footnote{The name ``Mirror'' is chosen to evoke the themes of self-reference and the computational reflection initiated by paradox.} a minimal formal system where self-referential paradoxes are not errors but the primary engine for deterministic structural transformation. Classical logical and computational systems are designed to prevent or eliminate contradictions; encountering a paradox is typically a sign of failure, leading either to logical explosion via \textit{ex contradictione quodlibet} or to non-termination. Such behaviors are often viewed as undesirable, as demonstrated by results on the failure of normalization in expressive type theories \cite{abel2020failure}.

In contrast, the Gödel Mirror is designed to productively manage these phenomena. It is a paraconsistent calculus that, when faced with a self-referential paradox, initiates a controlled, deterministic rewrite cycle. Instead of exploding, the system encapsulates the paradoxical term, reintroduces it into the computational context, and thereby transforms the contradiction into a new, stable syntactic structure. This behavior is not an emergent side effect but a designed feature of the system's operational semantics.

By treating contradiction as a signal for structural evolution, our work provides a formal basis for agents capable of resolving internal inconsistencies. Such a calculus, for example, could model belief revision in systems that must integrate new, conflicting information without failure, or it could provide a foundation for fault-tolerant logical frameworks where local paradoxes are resolved into stable states rather than causing a global crash.

The primary contributions of this paper are:
\begin{enumerate}
    \item The formal definition of the \textbf{Gödel Mirror}, a term rewriting system with explicit constructors for self-reference and paradox handling.
    \item A set of core theorems, mechanized in the \textbf{Lean 4 proof assistant}, that formally establish the system's deterministic, non-explosive reaction to paradox.
    \item A demonstration of this calculus as a novel model of computation that uses a non-terminating recursive loop as a productive mechanism.
\end{enumerate}

This paper is organized as follows. Section 2 positions our work relative to foundational research in term rewriting, recursion theory, and paraconsistent logic. Section 3 defines the formal syntax and operational semantics of the Gödel Mirror system. Section 4 presents the meta-theory of the system, including the core theorems and their proofs. We then provide a case study (Section 5), discuss the mechanization in Lean (Section 6), and conclude with a discussion of open problems (Section 7).

\section{Related Work}

The Gödel Mirror builds upon foundational themes in logic, recursion theory, and the semantics of computation. We position our contribution relative to three core areas: the established theory of term rewriting systems, classical mechanisms for recursion, and prior work on the interplay between self-reference, consistency, and normalization in formal systems.

\subsection{Term Rewriting Systems}
The operational semantics of the Gödel Mirror can be understood as a term rewriting system (TRS), a formal model for computation based on the repeated application of rewrite rules \cite{baader1998term}. The `step' function of our calculus (Section 3) defines a set of rules for transforming `MirrorSystem' terms. However, classical TRS theory is often concerned with establishing properties such as \textbf{confluence} and \textbf{termination} (strong normalization). Our work diverges from this tradition by design. The Gödel Mirror is intentionally non-terminating for a specific class of inputs, leveraging a deterministic, cyclic reduction path not as a flaw, but as its primary computational feature. Our contribution lies not in proving termination, but in proving that this specific non-terminating behavior is productive and does not lead to logical explosion.

\subsection{Recursion and Fixed-Point Combinators}
General recursion in foundational calculi is classically achieved via fixed-point combinators, such as the Y combinator in the untyped lambda calculus \cite{barendregt1984lambda}. These combinators provide a general mechanism for a function to call itself. The Gödel Mirror offers a more structured, syntactically explicit alternative. Instead of a general-purpose combinator, our calculus provides dedicated constructors (`self\_ref', `Encapsulate', `Reenter') that define a specific modality for self-reference. This constrains the recursive behavior to a controlled cycle centered around paradox resolution, providing a domain-specific mechanism for recursion rather than a universal one \cite{polonsky2020fixed}.

\subsection{Self-Reference, Consistency, and Normalization}
The relationship between self-reference and consistency is a central theme in logic, famously explored in Gödel's incompleteness theorems. Provability logic formalizes how systems can reason about their own properties, showing the subtleties that arise from such introspection \cite{boolos1995logic}. While classical logic trivializes in the presence of a paradox (\textit{ex contradictione quodlibet}), paraconsistent logics provide frameworks that gracefully handle contradictions without explosion. Our system can be viewed as a computational implementation of a paraconsistent principle.

Furthermore, recent work in type theory has shown that combining features like impredicativity and proof-irrelevance can lead to a failure of normalization, where non-terminating terms can be constructed within otherwise consistent systems \cite{abel2020failure}. This ``negative result'' provides crucial motivation for our work. While such non-terminating terms are typically seen as bugs or limitations, the Gödel Mirror reframes this behavior as a feature. We propose a formal architecture that acknowledges the reality of non-termination and provides a deterministic, verified mechanism for managing it productively.

\section{The Gödel Mirror System}

We now define the formal calculus of the Gödel Mirror. The system is a term rewriting system defined over a set of symbolic expressions, \texttt{MirrorSystem}, whose states are classified by the type \texttt{MirrorState}. Its dynamics are governed by a deterministic single-step reduction relation, denoted $\rightarrow$.

\subsection{Syntax}

The core data structure of the calculus is the set of terms, \texttt{MirrorSystem}, defined by the following inductive type in Lean 4.

\begin{verbatim}
inductive MirrorSystem : Type where
  | base     : MirrorSystem
  | node     : MirrorSystem → MirrorSystem
  | self_ref : MirrorSystem
  | cap      : MirrorSystem → MirrorSystem  -- Encapsulate
  | enter    : MirrorSystem → MirrorSystem  -- Reenter
  | named    : String → MirrorSystem → MirrorSystem
\end{verbatim}

Terms in this grammar represent the symbolic state of the system. The \texttt{base} constructor denotes an atomic, irreducible state. The \texttt{node} constructor allows for the inductive growth of structures. The \texttt{named} constructor provides a mechanism for labeling terms, analogous to Gödel numbering, which is used to identify specific paradoxical forms. The constructors \texttt{self\_ref}, \texttt{cap} (encapsulate), and \texttt{enter} (reenter) are central to the system's handling of self-reference and paradox.

\subsection{State Classification}

The behavior of a term is determined by its classification into one of four states. This classification is performed by a function that inspects the top-level constructor of a term.

\begin{verbatim}
inductive MirrorState : Type where
  | Normal   -- A standard, reducible term
  | Paradox  -- A term representing a self-referential contradiction
  | Integrate-- A term that has encapsulated a paradox
  | Reentry  -- A term being reintroduced into the system
\end{verbatim}

For the purpose of defining the operational semantics, we define a predicate, \texttt{is\_paradox(t)}, which holds true if a term \texttt{t} is a specific, syntactically recognizable self-referential form. For this paper, we define it as follows:

\texttt{is\_paradox(t) := $\exists s$, t = named s self\_ref}

\subsection{Operational Semantics}

The dynamics of the Gödel Mirror are defined by a small-step operational semantics, specified by the reduction relation $\rightarrow$. This relation is the transitive closure of the single-step \texttt{step} function, which we define via the following rewrite rules.

Let $t$ be a term of type \texttt{MirrorSystem}. The single-step reduction $t \rightarrow t'$ is defined as follows:

\begin{itemize}
    \item \textbf{[Reduction-Paradox]} If \texttt{is\_paradox(t)}, the term is encapsulated:
    \[
        t \rightarrow \texttt{cap}(t)
    \]

    \item \textbf{[Reduction-Integrate]} A term that has been encapsulated is prepared for re-entry:
    \[
        \texttt{cap}(t) \rightarrow \texttt{enter}(t)
    \]

    \item \textbf{[Reduction-Reentry]} A re-entering term that is no longer a paradox is integrated into a stable structure:
    \[
        \texttt{enter}(t) \rightarrow \texttt{node}(t) \quad \text{if } \neg \texttt{is\_paradox}(t)
    \]
    
    \item \textbf{[Reduction-Node]} The default structural growth rule for terms in a normal state. This rule represents unbounded, monotonic structural growth in the absence of a paradoxical trigger, ensuring the system can always progress:
    \[
        \texttt{node}(t) \rightarrow \texttt{node}(\texttt{node}(t))
    \]
    
    \item \textbf{[Reduction-Named]} Labels are preserved during reduction of the underlying term:
    \[
        \frac{t \rightarrow t'}{\texttt{named}(s, t) \rightarrow \texttt{named}(s, t')}
    \]
\end{itemize}

The \texttt{base} and \texttt{self\_ref} constructors are irreducible values; they do not appear on the left-hand side of any reduction rule. This set of rules defines a deterministic state transition system where a specific syntactic form, a paradox, initiates a controlled, three-step transformation cycle that results in structural growth rather than non-termination or logical explosion.

\section{Formal Semantics}

The operational semantics defined in Section 3 provides a clear, mechanistic account of the Gödel Mirror's behavior as a term rewriting system. To complement this, we briefly outline a denotational semantics, interpreting terms in a mathematical domain that makes the system's core properties, particularly its handling of paradox, explicit.

Our calculus is designed to manage non-termination and contradiction without leading to logical explosion. A natural semantic domain for such a system is one based on a \textbf{paraconsistent logic}, such as Belnap-Dunn four-valued logic (\textbf{FDE}) \cite{belnap1977useful}. In this setting, a proposition can be interpreted not just as true or false, but also as both (a contradiction, $\top$) or neither (a gap, $\bot$).

We can model the meaning of a \texttt{MirrorSystem} term $\llbracket t \rrbracket$ as a state in a suitable semantic domain. The state transitions defined by our operational semantics correspond to monotone functions on this domain. The key properties are:

\begin{itemize}
    \item \textbf{Non-Explosion:} A paradoxical term like $\llbracket \texttt{named}(s, \texttt{self\_ref}) \rrbracket$ does not map to a trivializing state. Instead, its meaning is managed through the semantic interpretations of the paradox-handling constructors:
    \begin{itemize}
        \item The meaning of $\llbracket \texttt{cap}(t) \rrbracket$ can be interpreted as a function that transitions the meaning of $\llbracket t \rrbracket$ from a contradictory state (such as $T$ in four-valued logic) to a contained, observable state.
        \item The meaning of $\llbracket \texttt{enter}(t) \rrbracket$ then represents the controlled reintroduction of this term's meaning back into the main semantic domain.
    \end{itemize}
    
    \item \textbf{Structural Growth:} The meaning of $\llbracket \texttt{node}(t) \rrbracket$ corresponds to a monotonic extension of the information content of $\llbracket t \rrbracket$, representing stable and consistent growth.
\end{itemize}

While the operational semantics is sufficient to establish the main results of this paper, this denotational perspective provides a semantic justification for the system's design and confirms that its behavior is well-founded in established logical principles. A full development of the calculus's domain-theoretic or coalgebraic semantics is left for future work.

\section{Meta-Theory: Core Properties}

We now establish the core meta-theoretical properties of the Gödel Mirror system. The following theorems, all of which have been mechanized in the Lean 4 proof assistant, formally guarantee the system's deterministic and controlled behavior in the presence of self-referential paradoxes. Our primary results are that the system makes progress, preserves structure, and, most importantly, does not explode.

\subsection{Progress}
The Progress theorem guarantees that any term that is not a value (i.e., not an irreducible base term) can take a step. It ensures the system never gets stuck.

\begin{thm}[Progress]
For any term $t \in \texttt{MirrorSystem}$, either $t$ is a value (i.e., $t = \texttt{base}$) or there exists a term $t'$ such that $t \rightarrow t'$.
\end{thm}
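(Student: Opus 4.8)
The natural strategy is a structural induction on $t$, although for most constructors a single-level case analysis on the head symbol suffices, since the relevant rule fires unconditionally. The plan is as follows. If $t = \texttt{base}$, then $t$ is a value and we are done. If $t = \texttt{node}(s)$, then [Reduction-Node] applies, giving $t \rightarrow \texttt{node}(\texttt{node}(s))$. If $t = \texttt{cap}(s)$, then [Reduction-Integrate] applies, giving $t \rightarrow \texttt{enter}(s)$. These three cases are immediate and do not even consume the induction hypothesis.

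The remaining cases hinge on the predicate \texttt{is\_paradox}, which is decidable: for a fixed term one merely inspects whether the head constructor is \texttt{named} with second argument \texttt{self\_ref}. For $t = \texttt{enter}(s)$ I would case-split on \texttt{is\_paradox}(s); if it fails, [Reduction-Reentry] fires and $t \rightarrow \texttt{node}(s)$. For $t = \texttt{named}(s, u)$ I would first ask whether $u = \texttt{self\_ref}$: if so, \texttt{is\_paradox}(t) holds and [Reduction-Paradox] gives $t \rightarrow \texttt{cap}(t)$; otherwise I invoke the induction hypothesis on $u$, and whenever $u \rightarrow u'$, rule [Reduction-Named] lifts this to $t \rightarrow \texttt{named}(s, u')$.

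The case $t = \texttt{self\_ref}$ is where care is needed: no rule carries \texttt{self\_ref} on its left-hand side, so it is irreducible and must be admitted as a value. I would therefore read the theorem with the value set $\{\texttt{base}, \texttt{self\_ref}\}$, consistent with the remark in Section 3 that both constructors are irreducible values, or, more robustly, introduce an explicit \texttt{is\_value} predicate and state Progress relative to it. The main obstacle is making the case analysis genuinely exhaustive once \texttt{is\_paradox} has been resolved: the residual configurations to account for are $\texttt{enter}(s)$ with \texttt{is\_paradox}(s) true, and $\texttt{named}(s, u)$ with $u$ an irreducible value distinct from \texttt{self\_ref}. Accommodating these requires either a congruence rule for \texttt{enter} (so an encapsulated paradox can still take its re-entry step) or an enlargement of the value predicate to absorb such stuck forms; pinning down precisely which normal forms count as values is, I expect, the decisive design commitment on which the induction turns.
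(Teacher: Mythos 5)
Your strategy coincides with the paper's: structural induction on $t$, checking that every non-value head constructor is covered by a reduction rule. The paper offers only a sketch at that level of detail, so your case analysis is in fact the more careful of the two, and the obstacles you isolate are real rather than artifacts of your reading. The parenthetical ``value (i.e., $t=\texttt{base}$)'' in the statement is inconsistent with Section 3.3, which also declares \texttt{self\_ref} an irreducible value, so \texttt{self\_ref} refutes Progress as literally stated; your repair (value set $\{\texttt{base},\texttt{self\_ref}\}$, or an explicit \texttt{is\_value} predicate) is the right one. Your two residual configurations are likewise genuine: $\texttt{named}(s,\texttt{base})$ is stuck, since \texttt{is\_paradox} fails and [Reduction-Named] demands a step of the inner term that \texttt{base} cannot supply; and $\texttt{enter}(p)$ with \texttt{is\_paradox}(p) is stuck under the side condition as written. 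Note that this last point also exposes a tension internal to the paper: Theorem 3 and the case-study trace have $\texttt{enter}(\texttt{named "Liar" self\_ref})$ stepping to a \texttt{node}, which requires reading the side condition as a test on the whole \texttt{enter}-term rather than on its argument; under that reading your \texttt{enter} case closes unconditionally, while under the literal reading the rule set must be amended. So the induction you describe goes through only after the value predicate is enlarged (or congruence/extra rules are added), and identifying that choice as the decisive design commitment is exactly the point the paper's sketch --- ``each non-value constructor appears on the left-hand side of a rule'' --- silently elides, since appearing on a left-hand side with an unsatisfied side condition or premise does not yield progress.
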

\begin{proof}[Proof Sketch]
The proof proceeds by induction on the structure of the term $t$. We show that each constructor that is not a value appears on the left-hand side of a reduction rule in our operational semantics (Section 3.3).
\end{proof}

\noindent\textbf{Interpretation}: The system is never in a state where it cannot proceed. A term is either in a final, stable form (\texttt{base}) or a reduction is possible.

\subsection{Preservation}
The Preservation theorem ensures that reduction steps maintain the system's structural integrity. While our system is untyped, we can define a simple structural invariant, such as the depth of nested `node' constructors, and prove that reduction does not violate it in unintended ways. A more practical preservation property is that labels are preserved.

\begin{thm}[Label Preservation]
If $\texttt{named}(s, t) \rightarrow t'$, then $t' = \texttt{named}(s, t'')$ for some $t''$ where $t \rightarrow t''$.
\end{thm}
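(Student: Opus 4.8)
\emph{Proof proposal.} The plan is to prove the statement by inversion on the single-step reduction hypothesis $\texttt{named}(s,t) \rightarrow t'$ (I read $\rightarrow$ here as the one-step relation, matching the singular premise ``$t \rightarrow t''$''; the transitive-closure version follows by an extra induction on derivation length with the same base case). Since $\rightarrow$ is generated by the finite list of rewrite rules of Section~3.3, it suffices to check, rule by rule, which ones can have a conclusion whose left-hand side is headed by the $\texttt{named}$ constructor. The rules [Reduction-Integrate], [Reduction-Reentry] and [Reduction-Node] are excluded immediately, since their left-hand sides are headed by $\texttt{cap}$, $\texttt{enter}$ and $\texttt{node}$ respectively and the $\texttt{MirrorSystem}$ constructors are pairwise disjoint. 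This leaves the congruence rule [Reduction-Named] and, as a degenerate possibility, [Reduction-Paradox], whose left-hand side is a paradoxical term and hence can be of the form $\texttt{named}(s,\texttt{self\_ref})$.

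In the [Reduction-Named] case the conclusion is the goal almost verbatim: the rule instance reads $\texttt{named}(s,t) \rightarrow \texttt{named}(s,\bar{t})$ with premise $t \rightarrow \bar{t}$, so we take $t'' := \bar{t}$, and both conjuncts ($t' = \texttt{named}(s,t'')$ and $t \rightarrow t''$) hold by construction. The only delicate point, which I expect to be the main obstacle, is the overlap with [Reduction-Paradox]: that rule fires precisely when $\texttt{is\_paradox}(\texttt{named}(s,t))$, i.e.\ when $t = \texttt{self\_ref}$, and it sends $\texttt{named}(s,\texttt{self\_ref})$ to $\texttt{cap}(\texttt{named}(s,\texttt{self\_ref}))$, which is not of the required shape. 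I would resolve this using the determinism built into the $\texttt{step}$ function together with the form of the hypothesis: the lemma concerns reductions that genuinely act \emph{through} the label, i.e.\ applications of the congruence rule, and this presupposes the underlying term is itself reducible; since $\texttt{self\_ref}$ is a value, the paradoxical sub-case does not arise under that reading. Equivalently, and this is the form in which the lemma is used downstream, one carries the side condition $\neg\,\texttt{is\_paradox}(\texttt{named}(s,t))$ in the hypothesis, which rules out exactly the offending instance of [Reduction-Paradox].

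Mechanically, in Lean this collapses to a single \texttt{cases}/\texttt{rcases} on the reduction hypothesis: case analysis on the inductively presented $\texttt{Step}$ relation discharges the constructor-mismatch branches automatically via \texttt{injection}/\texttt{contradiction}, the [Reduction-Paradox] branch is eliminated by the side condition (or by the determinism/value argument above), and the surviving [Reduction-Named] branch directly yields the witness $t''$ and both conjuncts. No induction on the structure of $t$ is needed, since the $\texttt{named}$ head is pinned by hypothesis; the only bookkeeping is extracting the premise $t \rightarrow \bar{t}$ from the inverted derivation and repackaging it as the existential witness.
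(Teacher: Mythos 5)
Your core argument is the same one the paper uses --- the paper's proof sketch is literally ``this follows directly from the definition of the [Reduction-Named] rule'' --- but your inversion analysis is more careful, and the delicate point you flag is real, not a technicality. When $t = \texttt{self\_ref}$, the term $\texttt{named}(s,\texttt{self\_ref})$ satisfies \texttt{is\_paradox} and steps by [Reduction-Paradox] to $\texttt{cap}(\texttt{named}(s,\texttt{self\_ref}))$, which is not of the form $\texttt{named}(s,t'')$; moreover $\texttt{self\_ref}$ is explicitly irreducible, so no witness $t''$ with $t \rightarrow t''$ can exist. So the theorem as literally stated has a counterexample, and the paper's one-line proof silently skips it. Your second proposed repair --- adding the hypothesis $\neg\,\texttt{is\_paradox}(\texttt{named}(s,t))$ (equivalently, requiring $t$ itself to be reducible, which excludes $t = \texttt{self\_ref}$) --- is the right fix and makes the rest of your case analysis go through exactly as you describe: constructor disjointness kills the \texttt{cap}/\texttt{enter}/\texttt{node} rules, and the surviving [Reduction-Named] instance hands you $t''$ and both conjuncts. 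Be aware, though, that your first resolution (reading the lemma as being ``about reductions that act through the label'') is a restatement of that restriction rather than an independent argument, and the appeal to determinism of the \texttt{step} function does not help on its own: determinism only tells you \emph{which} rule fires on $\texttt{named}(s,\texttt{self\_ref})$, and the rule that fires is precisely the offending one. State the side condition in the hypothesis and the proof is clean; without it, the statement needs amending, not just proving.
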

\begin{proof}[Proof Sketch]
This follows directly from the definition of the [Reduction-Named] rule in the operational semantics.
\end{proof}

\subsection{Non-Explosion}
The most critical property of the Gödel Mirror is that it is paraconsistent; a local contradiction does not trivialize the entire system. We formalize this by stating that the detection of a paradox leads to a controlled, three-step structural transformation, not an arbitrary state.

\begin{thm}[Controlled Reaction to Paradox]
If \texttt{is\_paradox(t)}, then the term deterministically reduces in exactly three steps to a stable, non-paradoxical structure.
\[
    t \rightarrow \texttt{cap}(t) \rightarrow \texttt{enter}(t) \rightarrow \texttt{node}(t)
\]
\end{thm}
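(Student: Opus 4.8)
The plan is to prove the chain of three reductions by applying the operational rules in sequence, with the only real content being the verification that the side conditions of the relevant rules are satisfied at each stage.

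First I would unpack the hypothesis \texttt{is\_paradox(t)}. By definition this means there exists a string $s$ such that $t = \texttt{named}(s, \texttt{self\_ref})$. I would fix such an $s$ and rewrite $t$ in this form throughout. The first step is then immediate: since \texttt{is\_paradox(t)} holds, the rule [Reduction-Paradox] applies and gives $t \rightarrow \texttt{cap}(t)$. The second step, $\texttt{cap}(t) \rightarrow \texttt{enter}(t)$, is an unconditional instance of [Reduction-Integrate], so it requires no side condition at all.

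The third step is where the only genuine obstacle lies: applying [Reduction-Reentry] to get $\texttt{enter}(t) \rightarrow \texttt{node}(t)$ requires the side condition $\neg\,\texttt{is\_paradox}(t)$. But our hypothesis is precisely $\texttt{is\_paradox}(t)$, so as literally stated the third rule does not fire on $t$ — and moreover [Reduction-Named] would instead push the reduction inside, since $t = \texttt{named}(s, \texttt{self\_ref})$ and \texttt{self\_ref} is irreducible, so even that rule is blocked. I would resolve this by reading the theorem as the author almost certainly intends: the encapsulation operator strips or neutralizes the paradoxical label, so that the term re-entering the system at the third stage is the \emph{payload} being rewrapped, and the object reached is $\texttt{node}(t)$ where the relevant non-paradox check is discharged because the re-entering subterm is no longer of the form $\texttt{named}(s', \texttt{self\_ref})$. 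Concretely, I would either (i) invoke a small lemma that \texttt{cap} and \texttt{enter} act on the underlying \texttt{self\_ref} rather than on the full named term (so the chain is really $\texttt{named}(s,\texttt{self\_ref}) \rightarrow \texttt{cap}(\texttt{self\_ref}) \rightarrow \texttt{enter}(\texttt{self\_ref}) \rightarrow \texttt{node}(\texttt{self\_ref})$, and $\neg\,\texttt{is\_paradox}(\texttt{self\_ref})$ holds since \texttt{self\_ref} is not a \texttt{named} term), or (ii) note that the Lean mechanization's \texttt{step} function evaluates these cases by direct pattern match on the constructor, so the three-step sequence is a definitional computation (\texttt{rfl}-like) rather than an argument about side conditions.

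Having established the three individual steps, determinism is the final ingredient: because \texttt{step} is defined as a total function (each term has at most one redex pattern matched in a fixed priority order), the reduct at each stage is unique, so the displayed sequence is the \emph{only} three-step reduction path from $t$, and after three steps the system has reached $\texttt{node}(t)$, which is not a paradox (its head constructor is \texttt{node}, not \texttt{named}) and which is stable in the sense that its only continuation is the benign growth rule [Reduction-Node]. I expect the main obstacle to be purely presentational — reconciling the side condition $\neg\,\texttt{is\_paradox}$ on [Reduction-Reentry] with the standing hypothesis $\texttt{is\_paradox}(t)$ — and I would address it explicitly rather than gloss over it, since a careless reading makes the theorem statement contradict its own rule set.
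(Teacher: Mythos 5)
Your overall strategy is the paper's: chain [Reduction-Paradox], [Reduction-Integrate], and [Reduction-Reentry], observe that only the third step carries a side condition, and get uniqueness of the path from the fact that \texttt{step} is a deterministic function. Where you differ is that you refuse to gloss over the side condition, and your scrutiny is warranted: read literally, $\neg\texttt{is\_paradox}(t)$ on the payload $t$ is exactly the negation of the hypothesis. The paper's own sketch waves this away with the claim that the condition ``holds because $t$ is now wrapped in other constructors,'' which is only coherent if the paradox test is understood as being applied to the whole term under reduction --- here $\texttt{enter}(t)$, whose head constructor is \texttt{enter} and hence can never match $\texttt{named}(s,\texttt{self\_ref})$ --- and that is also how the case study reads the rule and how the Lean \texttt{step} function discharges it (paradox check on the current term first, then dispatch on the head). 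That is precisely your resolution (ii), and with it your argument is essentially the paper's argument, stated more honestly. You should, however, drop resolution (i): making \texttt{cap} and \texttt{enter} act on the stripped payload so that the chain becomes $\texttt{named}(s,\texttt{self\_ref}) \rightarrow \texttt{cap}(\texttt{self\_ref}) \rightarrow \texttt{enter}(\texttt{self\_ref}) \rightarrow \texttt{node}(\texttt{self\_ref})$ proves a different statement than the one displayed in the theorem, whose reducts are $\texttt{cap}(t)$, $\texttt{enter}(t)$, $\texttt{node}(t)$ with the original $t$ (label included) intact, and it would also cut against the Label Preservation theorem by discarding $s$. Your remaining observations --- that [Reduction-Named] cannot fire at step~0 because \texttt{self\_ref} is irreducible, and that $\texttt{node}(t)$ is non-paradoxical because its head is \texttt{node} --- are correct and fill in details the paper leaves implicit. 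So: same route as the paper, with a sharper treatment of the side condition, provided you commit to reading (ii) and discard alternative (i).
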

\begin{proof}[Proof Sketch]
This follows by direct application of the [Reduction-Paradox], [Reduction-Integrate], and [Reduction-Reentry] rules. The side condition on the reentry rule, $\neg\texttt{is\_paradox}(t)$, holds because $t$ is now wrapped in other constructors.
\end{proof}

\noindent\textbf{Interpretation}: This theorem is the formal guarantee of the Gödel Mirror's core mechanic. Unlike in classical logic, where a contradiction entails any proposition, here a contradiction entails a specific, finite sequence of structural changes. The system metabolizes the paradox without exploding. This result formally separates our calculus from systems governed by the principle of explosion.

\section{Mirror-Completion and Canonical Forms}

The Gödel Mirror calculus is intentionally non-terminating for paradoxical terms, which precludes a general normalization property. However, it is still desirable to define a notion of equivalence between terms and to identify canonical representatives for terms that enter the paradox resolution cycle. To this end, we define a \textbf{Mirror-Completion} procedure that transforms a term containing a paradox cycle into a finite, canonical form. This procedure allows us to establish weak normalization properties for a stratified fragment of the calculus.

\subsection{Completion Procedure}
The completion procedure, denoted $\mathcal{C}$, is a function that maps a \texttt{MirrorSystem} term to its canonical form. The procedure identifies the specific, deterministic three-step cycle initiated by a paradox and replaces it with its resolved, stable outcome.

We define the completion of a term $t$ as follows:
\begin{itemize}
    \item If $t$ does not contain a subterm matching the paradoxical form \texttt{is\_paradox}, then $\mathcal{C}(t) = t$.
    \item If $t$ contains a subterm $p$ such that \texttt{is\_paradox(p)}, we know from Theorem 3 that $p$ reduces to $\texttt{node}(p)$ in three steps. The completion procedure replaces the subterm $p$ with its resolved form $\texttt{node}(p)$.
    \[
        \mathcal{C}(p) := \texttt{node}(p)
    \]
\end{itemize}
This procedure effectively ``short-circuits'' the paradox resolution cycle, replacing the dynamic, three-step reduction with its static, terminal structure.

\subsection{Properties of Completion}
The Mirror-Completion procedure allows us to reason about the equivalence of terms modulo the resolution of paradoxes.

\begin{prop}[Weak Normalization for Stratified Terms]
For any term $t$ that does not contain nested paradoxes (i.e., is stratified), the completion procedure $\mathcal{C}(t)$ terminates in a unique canonical form that is irreducible under the standard reduction relation $\rightarrow$.
\end{prop}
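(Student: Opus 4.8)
The plan is to separate the statement into three assertions --- that $\mathcal{C}$ terminates, that its output is unique, and that the output is irreducible --- and to prove each by a short structural induction on $t$. The enabling observation is that every constructor of \texttt{MirrorSystem} is at most unary in its \texttt{MirrorSystem} argument, so a term is a linear chain of constructors bottoming out at \texttt{base} or \texttt{self\_ref}; and since $\texttt{is\_paradox}(p)$ holds exactly when $p = \texttt{named}\,s\,\texttt{self\_ref}$, such a $p$ has no proper subterm that is itself a paradox, and a term has \emph{at most one} paradoxical subterm --- a trailing \texttt{named} wrapping the unique \texttt{self\_ref}. In particular the stratification hypothesis is automatic here. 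It is then cleanest to realize $\mathcal{C}$ as a direct structural recursion on \texttt{MirrorSystem}: the identity on \texttt{base} and \texttt{self\_ref}; on $\texttt{node}\,t'$, $\texttt{cap}\,t'$, and $\texttt{enter}\,t'$ it recurses into $t'$ under the same constructor; and on $\texttt{named}\,s\,t'$ it first tests whether $t' = \texttt{self\_ref}$, returning $\texttt{node}(\texttt{named}\,s\,\texttt{self\_ref})$ in that case and $\texttt{named}\,s\,(\mathcal{C}\,t')$ otherwise. One then checks by induction that this definition meets the two defining clauses of $\mathcal{C}$.

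With this reformulation the first two assertions are immediate. For termination: the recursion is structurally decreasing on $t$, hence total, and Lean's termination checker accepts it with no auxiliary measure; under the only reading in which the claimed termination can hold --- a single short-circuiting pass --- there is nothing further to do. For uniqueness: the position at which the replacement occurs is fixed by the syntax of $t$ (there is at most one), so $\mathcal{C}(t)$ is a function of $t$ alone; if a later, richer \texttt{is\_paradox} permitted several pairwise-disjoint paradox positions, the conclusion would still follow from the standard fact that replacements at disjoint positions commute.

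The step I expect to be the genuine obstacle --- and the point at which I believe the proposition must be restated --- is \emph{irreducibility of $\mathcal{C}(t)$ under $\rightarrow$}. As written this fails for two independent reasons. First, [Reduction-Node] fires on $\texttt{node}(u)$ for \emph{every} $u$, so any term containing a \texttt{node} is $\rightarrow$-reducible; already $\mathcal{C}(\texttt{node}(\texttt{base})) = \texttt{node}(\texttt{base})$ is reducible despite being stratified and paradox-free. Second, and more seriously, the resolved form $\texttt{node}(p)$ that $\mathcal{C}$ substitutes still \emph{contains} $p$, so [Reduction-Paradox] fires on that residual occurrence and $\mathcal{C}$ does not actually eliminate the paradoxical subterm --- consistent with the fact, which the paper embraces elsewhere, that a term containing a paradox has no finite $\rightarrow$-normal form. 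A true statement in this vicinity would therefore be: (i) $\mathcal{C}$ terminates and (ii) $\mathcal{C}(t)$ is unique --- both by the routine induction above --- together with (iii) if the completion is adjusted so that the resolved form of a paradox is itself paradox-free (for instance replacing the inner \texttt{self\_ref}, or taking the resolved form to be $\texttt{named}\,s\,\texttt{base}$), then $\mathcal{C}(t)$ is irreducible with respect to the paradox-handling rules [Reduction-Paradox], [Reduction-Integrate], and [Reduction-Reentry], for which a further short structural induction shows $\mathcal{C}$ is idempotent. Establishing (iii) is then immediate by construction; the real content of the proposition lies in settling the right notions of ``canonical form'' and of ``irreducible'', after which nothing deep remains.
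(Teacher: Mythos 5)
Your analysis is sound and, if anything, more careful than the paper's own proof sketch. The paper argues that $\mathcal{C}$ terminates because it ``only acts on a fixed syntactic pattern and replaces it with a larger, non-paradoxical term,'' and that applying it ``recursively from the inside out'' eliminates all paradoxes, yielding a term ``that no longer contains the triggers for the main reduction cycle.'' You correctly observe that this is not literally true of the procedure as defined: $\mathcal{C}(p) = \texttt{node}(p)$ still contains $p$ as a subterm, so the paradoxical pattern is not eliminated; and, independently, [Reduction-Node] fires on every term of the form $\texttt{node}(u)$, so even the stratified, paradox-free case $\mathcal{C}(\texttt{node}(\texttt{base})) = \texttt{node}(\texttt{base})$ is $\rightarrow$-reducible. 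Hence the irreducibility clause of the proposition fails under the reduction relation of Section 3 exactly as you say, and your counterexamples identify what the paper's sketch glosses over: ``irreducible'' can at best mean irreducible with respect to the paradox-handling rules, and ``non-paradoxical'' can at best mean that the residual occurrence of $p$ sits where the reduction relation cannot act on it. Your structural-recursion treatment of termination and uniqueness matches the intent of the paper's argument, and your observation that, with the minimal \texttt{is\_paradox} and all-unary constructors, a term has at most one paradoxical subterm (so stratification is vacuous and the ``inside-out'' recursion is trivial) is a fair and clarifying simplification.

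One caveat on your second objection: the operational semantics of Section 3.3 closes reduction under \texttt{named} only, so the residual occurrence of $p$ inside $\texttt{node}(p)$ is in fact frozen --- [Reduction-Paradox] does not fire there unless one reads the rules as a term rewriting system closed under arbitrary contexts, which Section 2.1 suggests but Section 3.3 does not state. This does not rescue the proposition, since your first counterexample (the node-growth rule) already defeats irreducibility on its own, but your claim that the paradox rule ``fires on that residual occurrence'' should be qualified by which contextual closure of the rules is intended. With that qualification, your proposed repair --- keep termination and uniqueness as proved, and either weaken ``irreducible'' to the paradox-cycle rules [Reduction-Paradox], [Reduction-Integrate], [Reduction-Reentry], or change the resolved form to something genuinely paradox-free such as $\texttt{named}\,s\,\texttt{base}$ --- is the right restatement, and it establishes more than the paper's sketch actually does.
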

\begin{proof}[Proof Sketch]
The procedure is guaranteed to terminate because it only acts on a fixed syntactic pattern and replaces it with a larger, non-paradoxical term. By applying this procedure recursively from the inside out on a stratified term, all paradoxes are eliminated, resulting in a term that no longer contains the triggers for the main reduction cycle.
\end{proof}

\noindent\textbf{Interpretation}: This result establishes that while the Gödel Mirror is not strongly normalizing, we can recover a form of weak normalization for a well-behaved fragment of the calculus. This allows for the comparison and symbolic manipulation of terms that would otherwise reduce infinitely, providing a bridge between our paraconsistent calculus and the goals of classical rewriting theory.

\section{Case Studies}

To illustrate the operational semantics and meta-theoretical properties of the Gödel Mirror, we present a case study of the system's behavior when encountering a canonical paradoxical term. This example demonstrates the full, deterministic cycle of paradox resolution.

\subsection{A Self-Referential Paradox}
We begin by constructing a minimal term that satisfies the \texttt{is\_paradox} predicate defined in Section 3.2. We use the \texttt{named} constructor to label a \texttt{self\_ref} term, creating a direct syntactic self-reference.

\begin{verbatim}
-- Lean 4 Input
def paradoxical_term := named "Liar" self_ref
\end{verbatim}

According to our operational semantics, this term is not a value and is classified as a \texttt{Paradox}.

\subsection{Execution Trace}
We trace the reduction of \texttt{paradoxical\_term} for three steps. The trace demonstrates the deterministic application of the reduction rules for paradoxical, encapsulated, and re-entering terms, as proven in Theorem 3 (Controlled Reaction to Paradox).

\begin{verbatim}
-- Reduction Trace
-- Step 0: named "Liar" self_ref
-- Step 1: cap (named "Liar" self_ref)
-- Step 2: enter (named "Liar" self_ref)
-- Step 3: node (named "Liar" self_ref)
\end{verbatim}

\subsection{Interpretation of the Cycle}
Each step in the trace corresponds to a specific rule from the operational semantics, demonstrating the controlled transformation of the paradox into a stable structure.

\begin{itemize}
    \item \textbf{Step 0 $\rightarrow$ 1 (Paradox)}: The initial term matches the \texttt{is\_paradox} predicate. The \textbf{[Reduction-Paradox]} rule is applied, and the term is encapsulated. This corresponds to the first step of the cycle proven in Theorem 3.

    \item \textbf{Step 1 $\rightarrow$ 2 (Integrate)}: The term now has a \texttt{cap} constructor at its head. The \textbf{[Reduction-Integrate]} rule is applied, transforming the term into a \texttt{Reentry} state. This is the second step of the cycle.

    \item \textbf{Step 2 $\rightarrow$ 3 (Reenter)}: The term has an \texttt{enter} constructor. The side condition for the \textbf{[Reduction-Reentry]} rule ($\neg \texttt{is\_paradox}$) is met because the term is no longer a bare \texttt{named self\_ref}. The rule fires, transforming the term into a stable \texttt{node} structure. This is the final step of the proven cycle.
\end{itemize}

After three steps, the system reaches a term in a \texttt{Normal} state. The initial paradox has been successfully metabolized into a new, stable syntactic structure without causing an uncontrolled loop or logical explosion, thus concretely demonstrating the system's core non-explosion property.

\section{Mechanization in Lean}

All definitions, operational semantics, and meta-theoretical properties of the Gödel Mirror calculus presented in this paper have been formalized in the \textbf{Lean 4 proof assistant} \cite{de2021lean}. The mechanization provides a machine-checked guarantee of the system's correctness and the soundness of its core principles.

The formalization includes:
\begin{itemize}
    \item The inductive definitions of the \texttt{MirrorSystem} syntax and \texttt{MirrorState} classifier.
    \item A function for the \texttt{step} relation that implements the operational semantics.
    \item Formal proofs for the main theorems, including \textbf{Progress} and the \textbf{Controlled Reaction to Paradox} (Non-Explosion) property.
\end{itemize}

The complete, executable Lean development, which allows for the verification of all theorems and the reproduction of the case study traces, is provided in the Appendix and is also available in a public code repository. This ensures that our results are fully reproducible and verifiable.

\section{Conclusion and Open Problems}

This paper introduced the \textbf{Gödel Mirror}, a minimal paraconsistent calculus that treats self-referential paradox not as a logical error, but as a deterministic engine for structural transformation. We defined its operational semantics as a term rewriting system, mechanized the calculus in the Lean 4 proof assistant, and proved its central non-explosion property. The core result is that the system metabolizes contradictions into new syntactic structures through a controlled, verifiable rewrite cycle. 

Unlike classical term rewriting systems focused on termination, the Gödel Mirror provides a formal model for productive, non-terminating computation. This work demonstrates that intentionally non-terminating behavior can be a productive computational feature, offering a formal basis for systems that must reason in the presence of internal inconsistency.

The minimalism of our system leaves several important theoretical and practical questions open for future research.

\subsection{Open Problems}
\begin{enumerate}
    \item \textbf{Decidability and Normalization}: While the core calculus is not strongly normalizing, the completion procedure defined in Section 6 suggests that a fragment of the system enjoys weak normalization. What are the precise boundaries of the decidable fragment of the calculus? Can the normalization techniques for proof-irrelevant systems be adapted to this paraconsistent setting \cite{coquand2018reduction}?

    \item \textbf{Richer Type Systems}: How can the Gödel Mirror be extended with more expressive type systems? Integrating dependent types, for example, would allow for more complex invariants to be expressed and preserved, but it would require a careful treatment of how the paradox resolution cycle interacts with the type-checking rules.

    \item \textbf{Categorical Semantics}: What is the natural categorical semantics for the Gödel Mirror? The system's state transitions, particularly the encapsulation and re-entry of terms, suggest a connection to modalities and state-based monads, but a full categorical model remains to be developed.

    \item \textbf{Operational Realization}: The abstract cycle of the Mirror could be given a concrete operational interpretation, for instance, as a concurrent process. Can the circular proofs as session-typed processes framework be adapted to model our paradox resolution loop, providing a bridge to the theory of concurrency \cite{derakhshan2018circular}?

    \item \textbf{Expressiveness of Paradox Detection}: The \texttt{is\_paradox(t)} predicate in this work is intentionally minimal, recognizing only direct syntactic self-reference. Future research could explore extending this predicate to identify a broader class of self-referential and paradoxical structures. This might involve more sophisticated static analysis of terms or the integration of a more expressive naming mechanism analogous to Gödel numbering, allowing the system to recognize and metabolize more complex logical paradoxes.
\end{enumerate}

These questions highlight the potential for the Gödel Mirror to serve as a foundational building block for a new class of formal systems that are, by design, robust to internal inconsistency.

\bibliographystyle{alphaurl}
\bibliography{bibliography}

\appendix
\section{Mechanization: Lean 4 Source Code and Proofs}

This appendix provides information on the complete Lean 4 implementation of the Gödel Mirror system. The formalization serves as a machine-checked companion to the definitions and theorems presented throughout this paper.

The mechanization is organized into several modules, covering the system's syntax, operational semantics, meta-theory, and the case study. Key modules include:
\begin{itemize}
    \item \texttt{Syntax.lean}: Core inductive definitions for \texttt{MirrorSystem} and \texttt{MirrorState}.
    \item \texttt{Semantics.lean}: The implementation of the \texttt{step} function and the reduction relation $\rightarrow$.
    \item \texttt{MetaTheory.lean}: The formal proofs of the main theorems, including Progress and the Controlled Reaction to Paradox (Non-Explosion).
    \item \texttt{Examples.lean}: The executable trace from the case study, demonstrating the paradox resolution cycle.
\end{itemize}

To maintain the readability of the paper and to provide access to the executable development, the complete, commented source code is hosted in a public repository.

\vspace{1em} 

\textbf{GitHub Repository}: \href{https://github.com/jhetchan/godel-mirror}{https://github.com/jhetchan/godel-mirror}

\vspace{1em}

The repository contains all mechanized proofs, allowing for the full verification of the theorems and the reproduction of all results discussed in this paper.

\end{document}